\newtheorem{theorem}{Theorem}
\newtheorem{Theorem}[theorem]{Theorem}
\newtheorem{Lemma}[theorem]{Lemma}
\newtheorem{Corollary}[theorem]{Corollary}
\newtheorem{Definition}[theorem]{Definition}
\begin{document}

\title{Further Limitations on Information-Theoretically Secure Quantum Homomorphic Encryption}

\author{Michael Newman$^*$}%
\affiliation{%
Departments of Physics and Electrical and Computer Engineering\\
  Duke University, Durham, NC, 27708, USA
}%

\date{\today}

\begin{abstract}
In this brief note, we review and extend existing limitations on information-theoretically (IT) secure quantum fully homomorphic encryption (QFHE).  The essential ingredient remains Nayak's bound, which provides a tradeoff between the number of homomorphically implementable functions of an IT-secure QHE scheme and its efficiency.  Importantly, the bound is robust to imperfect IT-security guarantees. We summarize these bounds in the context of existing QHE schemes, and discuss subtleties of the imposed restrictions.
\end{abstract}
\maketitle

\section{Introduction}

Fully homomorphic encryption is one of the great advances of modern cryptography.  First discovered by Gentry in 2009 \cite{Gentry:2009}, it allows one to delegate the processing of \emph{encrypted} information to a party without access to the secret key.  In classical computing, an enormous body of work has gone into developing and optimizing this protocol (see \cite{Peikert:2015} for a summary).

As the development of large scale quantum computers progresses, we must consider the cryptographic consequences of their arrival.  While quantum computers can bolster the security of some cryptographic protocols \cite{Bennett:1984}, they can also obviate the security of others \cite{Shor:1995}.  Fortunately, the security of existing homomorphic encryption schemes is derived from hard problems on lattices \cite{Brakerski:2013, Gentry:2013}, which are expected to be computationally infeasible for quantum computers to solve.  Nonetheless, it is natural to ask: \vspace{.5 cm} \\ \emph{Can quantum computers allow for fully homomorphic encryption schemes which exhibit information-theoretic, rather than computational, security?} \vspace{.5cm}

For the strongest security definitions, we review why the answer is no while expanding on the limitations appearing in \cite{Newman:2017, Lai:2017, Yu:2014, Baumeler:2013}.

\section{Homomorphic Encryption}

\subsection{Classical homomorphic encryption}

A (classical) homomorphic encryption scheme is typically defined as an asymmetric key encryption scheme with an additional functionality, called evaluation.  This functionality allows a third party, in possession of a ciphertext, to meaningfully manipulate the underlying plaintext without possessing the secret key.  Formally, a homomorphic encryption scheme HE is a four-tuple of (randomized) algorithms.

\begin{itemize}
\item[] {\bf HE.KeyGen$(1^\kappa, 1^L)= (pk, sk, evk)$}. A key generation algorithm that accepts security parameter $\kappa$ and evaluation parameter $L$.  It outputs the public key $pk$ and secret key $sk$ which depend on $\kappa$.  Additionally, it outputs a third key known as the evaluation key $evk$, which depends on $L$.  This key assists with the additional evaluation functionality.

\item[] {\bf HE.Enc$(m,pk) = c$}. An encryption algorithm that accepts a public key $pk$ and a single bit plaintext $m$, and then outputs a ciphertext $c$.  By slight abuse of notation, we also allow $m$ to be a bit string, and assume the algorithm performs encryption bit-by-bit.

\item[] {\bf HE.Dec$(c,sk) = m$}. A decryption algorithm that accepts a single ciphertext $c$ and secret key $sk$ and outputs a single bit plaintext $m$.  Again, we allow multi-ciphertext inputs and assume decryption occurs bitwise.

\item[] {\bf HE.Eval$(C, (c_1, \ldots, c_n), evk) = c'$}. An evaluation algorithm that accepts a circuit $C$ with $n$ input wires.  It further accepts $n$ ciphertexts $(c_1, \ldots, c_n)$ and an evaluation key $evk$.  It outputs a single new ciphertext $c'$.
\end{itemize}

The encryption scheme HE should satisfy the usual properties of any encryption scheme, but should further satisfy the following homomorphic property.  For some circuits $C$ which we call the \emph{permissible functions} of the scheme, we have the following commutative diagram.

\begin{center}
\hspace{2cm}\begin{tikzcd}
\mathcal{M} \arrow[r, "\textbf{HE.Enc}{(\cdot,pk)}"]  \arrow{d}[swap]{C} &[1.5cm] \mathcal{C} \arrow[d, "\textbf{HE.Eval}{(C, \cdot, evk)}"] \\[1.5cm]
\mathcal{M} & \mathcal{C} \arrow[l, "\textbf{HE.Dec}{(\cdot, sk)}"]
\end{tikzcd}
\end{center}

Here, $\mathcal{M}$ is the space of valid plaintexts (i.e. all binary strings), and $\mathcal{C}$ is the space of valid ciphertexts.  One can think of the data processing as occurring from top-to-bottom, and the encryption as occurring from left-to-right.  Plainly, a party Alice can perform the computation of $C$ herself, or outsource the computation by sending an encrypted message to a third party Bob.

We further call a homomorphic encryption scheme \emph{compact} if the complexity of HE.Dec is independent of the function being evaluated.  This precludes trivial schemes in which Bob simply sends back a description of the circuit to be evaluated, and the true evaluation function is embedded into the decryption itself.  Here and throughout, we will only consider compact schemes.

We call a homomorphic encryption scheme \emph{leveled fully homomorphic} if the set of permissible functions of the scheme is the set of all circuits up to some size specified by the evaluation parameter $L$.  We simply call a scheme \emph{fully homomorphic} if the set of permissible functions is the set of all circuits, independent of $L$.  

Typically, the ciphertexts in homomorphic encryption schemes experience an accumulation of noise that scales with the evaluated circuit depth.  Eventually, this noise will prevent accurate decryption, and this motivates the definition of leveled fully homomorphic schemes.  It is important to note that Alice's work may scale with the circuit to be evaluated, but this is realized implicitly as preprocessing in the key generation phase.  A bootstrapping procedure introduced in \cite{Gentry:2009} allows for the indefinite refreshing of noisy ciphertexts, but requires a stronger circular security assumption.

Additionally, homomorphic encryption schemes sometimes allow for some small probability of failure.  To simplify the discussion, we demand that schemes are perfectly correct, but note that we can extend all our arguments to the imperfect case with some extra notational baggage \cite{Baumeler:2013}.

\subsection{IT-secure quantum homomorphic encryption}

In \cite{Broadbent:2014}, the problem of extending homomorphic encryption to the \emph{quantum} setting was considered.  Quantum homomorphic encryption accomplishes a similar task to classical homomorphic encryption, but with some key differences.  Formally, we can model QHE as three families of quantum channels parametrized by the input size $n$ acting on four Hilbert spaces: $\mathcal{H}_M$ the message space, $\mathcal{H}_K$ the key space, $\mathcal{H}_C$ the ciphertext space, and $\mathcal{H}_E$ the evaluated ciphertext space (see Figure \ref{HE_diagram}). 

\begin{figure}[htb!]
\includegraphics[width=\linewidth]{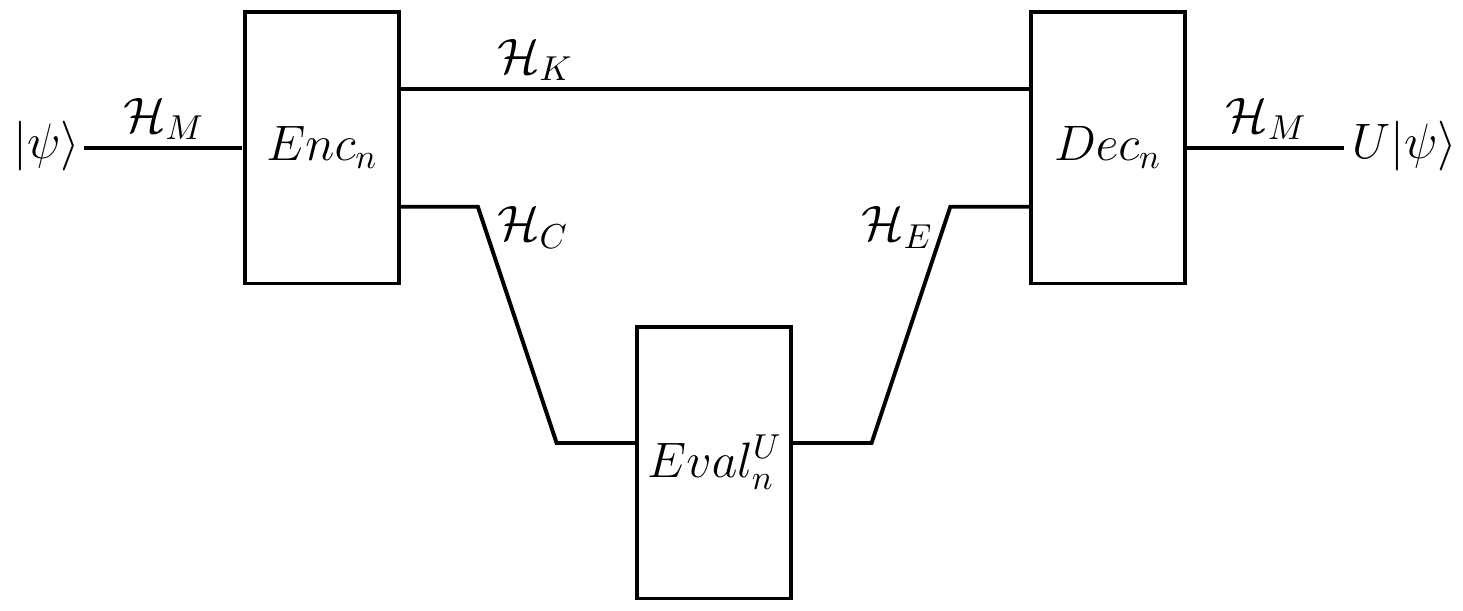}
\caption{A diagram of QHE matching the generality of the model in \cite{Yu:2014}.  Usually, we take $\mathcal{H}_C = \mathcal{H}_E$.}
\label{HE_diagram}
\end{figure}

Focusing on IT-secure encryption, we define a symmetric key QHE scheme.  Formally, it consists of the following families of channels.

\begin{itemize}

\item[] {\bf QHE.Enc$_n:L(\mathcal{H}_M) \longrightarrow L(\mathcal{H}_K \otimes \mathcal{H}_C)$}. An encryption isometry that accepts an input state $\ket{\psi}$ and then outputs a key $\rho_k \in L(\mathcal{H}_K)$ and a ciphertext $\rho_{c} \in L(\mathcal{H}_C)$ which includes an appended evaluation key \footnote{Note that a scheme defined with a randomly generated classical key can be transformed into a QHE scheme of the prescribed form by purifying the key.}.

\item[] {\bf QHE.Dec$_n:L(\mathcal{H}_K \otimes \mathcal{H}_C) \longrightarrow L(\mathcal{H}_M)$}. A decryption channel that accepts a key $\rho_k \in L(\mathcal{H}_K)$ and evaluated ciphertext $\rho_{e} \in L(\mathcal{H}_E)$, and returns  a plaintext state $\rho_m \in L(\mathcal{H}_M)$.

\item[] {\bf QHE.Eval$^U_n: L(\mathcal{H}_C) \longrightarrow L(\mathcal{H}_E)$}. An evaluation channel for a unitary circuit $U$ with $n$ wires that accepts a ciphertext state $\rho_c \in \mathcal{H}_C$ and outputs an evaluated ciphertext state $\rho_{e} \in \mathcal{H}_E$.  Usually $\mathcal{H}_C = \mathcal{H}_E$, but this is not required.

\end{itemize}

Notice that the security parameter $\kappa$, which determines the length of secret key and encoding size, does not appear explicitly.  Since quantum IT-secure encryption must obey similar key length bounds as classical IT-secure encryption \cite{lai2018generalized}, the key length must grow as a function of the input for fixed security guarantees.  Consequently, for a particular QHE scheme, we define $\kappa$ implicitly as some fixed polynomial of $n$.  Furthermore, encryption is performed all at once, rather than bit-by-bit.  

For leveled schemes, we similarly define the evaluation parameter $L$ as another fixed polynomial of $n$, and assume that the evaluation key is appended to the ciphertext.  This differs only slightly from the usual computationally-secure QHE setup \cite{Broadbent:2014}, but removes the need for explicit key generation.  

The homomorphic property of the scheme is defined analogously to the classical setting. There, homomorphic evaluation is usually defined piece-by-piece for the gates comprising a circuit.  For \emph{universal} gate sets, such as $\{$AND,OR$\}$ or $\{$NAND$\}$, homomorphic evaluation of these constituent gates can be built into (leveled) FHE schemes.  For QHE schemes, the set of permissible functions is augmented by a richer set of constituent gates. The definitions for compact and homomorphicity carry over analogously to the quantum setting.

\section{QHE proposals}

\subsection{Computationally secure proposals}
Computationally secure QHE was first considered in \cite{Broadbent:2014}, along with an appropriate generalization of CPA security.  The authors proposed three QHE schemes.  The first was a scheme that could homomorphically implement the set of Clifford circuits, and followed directly from the quantum one-time pad.  The second was a \emph{quasicompact} scheme: its decryption function scaled quadratically \emph{only} in the number of $T$-gates of the circuit.  The final scheme required an evaluation key size scaling superexponentially in the $T$-depth of the circuit, and so was restricted to efficiently evaluating circuits of constant $T$-depth.

More recently, \cite{Dulek:2016} proposed a scheme built off of previous work on instantaneous nonlocal computation \cite{Speelman:2015}. The scheme is centered around an evaluation key consisting of $T$-gadgets, which allows for the homomorphic evaluation of one $T$-gate per $T$-gadget.  This provides a leveled quantum fully homomorphic encryption scheme for polynomial-sized circuits, as these are the circuits for which the evaluation key can be generated efficiently.  This was further extended to a verifiable scheme in \cite{Dulek:2017}.  

Very recently, a scheme was proposed for performing leveled QFHE with a purely classical client using entirely different means \cite{Mahadev:2017}.  This scheme is built on an encrypted controlled-not function, which allows a server to apply the control-not gate blindly, i.e. without knowing whether they had or not.  The function requires a classical HE scheme with a particular list of desiderata. Whether or not this constitutes a non-leveled scheme depends on whether one accepts the circular security of existing schemes satisfying these requirements.

One common thread throughout all of these proposals is that each is built on a classical FHE scheme, and so inherits its underlying computational security.  It is natural to ask if quantum mechanics might allow for an \emph{information-theoretically} secure delegation of computation on encrypted information.  To this end, we might be encouraged by the construction of universal blind computation \cite{Broadbent:2008}, which allows delegated BQP quantum computation that guarantees information-theoretically secure hiding of \emph{both} the plaintext \emph{and} the computation, at the expense of interaction between Alice and Bob.

\subsection{Information-theoretically secure proposals}

There have been several works aimed towards homomorphic encryption with information-theoretic security guarantees.  In \cite{Tan:2014}, a homomorphic encryption scheme based on bosonic encodings was proposed.  This scheme used a weaker version of information-theoretic security by bounding the information accessible by the adversary.  This allowed them to realize a fully unitary group of permissible functions, although these were not universal as the dimension of the group scaled \emph{polynomially} in the input size.

Later, \cite{Ouyang:2015} proposed a homomorphic encryption scheme based off of randomized quantum codes and transversal gates.  The permissible functions for this scheme included all Clifford circuits augmented by a constant number of $T$-gates.  The security guarantees for this scheme were stronger, providing exponential suppression on the trace distance between any two ciphertexts.

More recently, \cite{Lai:2017} detailed a scheme based on the quantum one-time pad to implement the class of IQP circuits homomorphically.  Their scheme offers similarly strong information-theoretic security guarantees.

Finally, \cite{Tan:2017} proposed a homomorphic encryption scheme with a limited class of operations and modest information-theoretic security claims, but which may be implemented on current optical technologies.  For a broader survey of securely delegated quantum computing, see \cite{Fitzsimons:2017}.

\section{Restrictions on information-theoretically secure proposals}

We now elaborate on certain no-go theorems which limit the capacity of homomorphic encryption schemes to exhibit meaningful information-theoretic security.  It is well-known that in the classical setting, perfect information-theoretically secure homomorphic encryption is impossible \cite{Fillinger:2012}.  This follows from communication bounds established for perfectly secure single-server private information retrieval \cite{Kushilevitz:1997}, and their relaxations \cite{Shubina:2007}.

The first restriction on QHE information-theoretic security was proven in \cite{Yu:2014}.  There, they use a data localization argument via the no-programming theorem \cite{Nielsen:1997} to show the following.

\begin{Theorem}[Yu, Perez-Delgado, Fitzsimons] \label{first}
Suppose that a QHE scheme implements a set of unitary permissible functions $\mathcal{S}$, with precisely zero mutual information between the plaintext and ciphertext.  Then, the size of the evaluated ciphertext must be of size at least $\log_2(|\mathcal{S}|)$.
\end{Theorem}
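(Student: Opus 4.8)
The plan is to convert perfect security into a statement of data localization and then invoke the no-programming theorem of Nielsen and Chuang \cite{Nielsen:1997}. First I would unpack the hypothesis: zero mutual information between plaintext and ciphertext means the reduced ciphertext state $\rho_c = \mathrm{Tr}_K \ket{\Phi_\psi}\!\bra{\Phi_\psi}$ (where $\ket{\Phi_\psi}$ is the image of $\ket{\psi}$ under the encryption isometry) is the \emph{same} state for every input $\ket{\psi}$. Since the evaluation channel acts only on $\mathcal{H}_C$, the evaluated ciphertext $\rho_e^{(U)}$ that Alice receives also depends only on $U$ and not on the plaintext. The entire dependence on $\ket{\psi}$ has thus been localized in the key, while the entire record of which $U$ was applied sits in the evaluated ciphertext. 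This is exactly the setup of a programmable quantum gate: decryption is a fixed channel that, fed the key (holding $\psi$) and the evaluated ciphertext (a program for $U$ that is independent of $\psi$), outputs $U\ket{\psi}$.

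Next I would make the no-programming argument quantitative. Dilate the evaluation and decryption channels to isometries $V_U\colon \mathcal{H}_C \to \mathcal{H}_E\otimes\mathcal{H}_B$ and $W\colon \mathcal{H}_K\otimes\mathcal{H}_E \to \mathcal{H}_M\otimes\mathcal{H}_A$, and set $\ket{\Xi_{\psi,U}} = (W\otimes I_B)(I_K\otimes V_U)\ket{\Phi_\psi}$. Because correctness forces the $\mathcal{H}_M$-marginal of $\ket{\Xi_{\psi,U}}$ to be the pure state $U\ket{\psi}$, the global state must factor as $\ket{\Xi_{\psi,U}} = (U\ket{\psi})\otimes\ket{\eta_{\psi,U}}$; moreover, since $\ket{\psi}\mapsto\ket{\Xi_{\psi,U}}$ is linear (a composition of isometries), a short argument shows the residual junk $\ket{\eta_{\psi,U}} = \ket{\eta_U}$ is in fact independent of $\psi$. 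I would then compute the overlap $\langle \Xi_{\psi,U_1}|\Xi_{\psi,U_2}\rangle$ in two ways. Using $W^\dagger W = I$ together with the plaintext-independence of $\rho_c$ gives $\langle \Xi_{\psi,U_1}|\Xi_{\psi,U_2}\rangle = \mathrm{Tr}[V_{U_1}^\dagger V_{U_2}\,\rho_c]$, a constant independent of $\psi$; using the factored form gives $\bra{\psi}U_1^\dagger U_2\ket{\psi}\,\langle\eta_{U_1}|\eta_{U_2}\rangle$.

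Equating the two, for any $U_1,U_2\in\mathcal{S}$ that differ by more than a global phase the quantity $\bra{\psi}U_1^\dagger U_2\ket{\psi}$ genuinely varies with $\ket{\psi}$, so constancy of the product forces $\langle\eta_{U_1}|\eta_{U_2}\rangle = 0$ and hence $\mathrm{Tr}[V_{U_1}^\dagger V_{U_2}\rho_c]=0$. Purifying $\rho_c$ by the key, this says the (purified) evaluated ciphertexts for distinct permissible unitaries are mutually orthogonal, and a collection of $|\mathcal{S}|$ mutually orthogonal states forces $\dim\mathcal{H}_E \ge |\mathcal{S}|$, i.e.\ the evaluated ciphertext has size at least $\log_2|\mathcal{S}|$ qubits.

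The step I expect to be the genuine obstacle is this last localization: the overlap computation naturally produces orthogonality in the \emph{dilated} space $\mathcal{H}_E\otimes\mathcal{H}_B$ (together with a reference purifying $\rho_c$), whereas the theorem bounds $\mathcal{H}_E$ alone. One must use that decryption touches only the key and $\mathcal{H}_E$, never Bob's evaluation environment, so that the key itself serves as the purifying reference and the orthogonality is already witnessed on $\mathcal{H}_K\otimes\mathcal{H}_E$ prior to decryption. Handling the key--ciphertext entanglement and the mixedness of the programs without smuggling in extra dimensions is the delicate heart of the argument; it is precisely the interplay of perfect security (which fixes $\rho_c$), correctness (which purifies the output and, via linearity, decouples the junk from the plaintext), and the fact that decryption never sees the environment that makes the clean no-programming conclusion go through.
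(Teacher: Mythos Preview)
The paper does not actually prove this theorem; it is quoted from \cite{Yu:2014} with only the one-line description ``a data localization argument via the no-programming theorem \cite{Nielsen:1997}.'' Your proposal reconstructs precisely that argument: perfect security freezes $\rho_c$, pushing all plaintext dependence into the key and all $U$-dependence into the evaluated ciphertext, and then the Nielsen--Chuang overlap calculation forces distinct programs to be orthogonal. So at the level of approach you are exactly on target.

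The localization issue you flag is real and is the only place the sketch is not yet airtight. Your overlap identity $\mathrm{Tr}[V_{U_1}^\dagger V_{U_2}\rho_c]=0$ yields orthogonality of the pure states $(I_K\otimes V_U)\ket{\Phi_\psi}$ in $\mathcal{H}_K\otimes\mathcal{H}_E\otimes\mathcal{H}_B$, and the observation that decryption never touches $\mathcal{H}_B$ together with purity of the output does let you upgrade this to orthogonal supports of the mixed states on $\mathcal{H}_K\otimes\mathcal{H}_E$. But that still only bounds $\dim(\mathcal{H}_K)\cdot\dim(\mathcal{H}_E)$, not $\dim(\mathcal{H}_E)$ alone. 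To finish you need one more ingredient: the $\mathcal{H}_K$-marginal is the \emph{same} state $\rho_K^\psi$ for every $U$ (the key is fixed before evaluation), so the family $\{\sigma_{KE}^{(\psi,U)}\}_U$ consists of orthogonally-supported extensions of a single fixed marginal. Combined with the bound $\dim(\mathcal{H}_K)\le\dim(\mathcal{H}_C)$ (which the paper invokes in its own Theorem~\ref{main}), this is what pins the count to $\mathcal{H}_E$; as stated, your sketch stops just short of making that last reduction explicit.
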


When $S$ is the set of all classical reversible functions, this becomes $$\log_2((2^n)!) = (n - \log_2(e))2^n + O(n).$$ Thus, in the case of \emph{perfect} information-theoretic security, any QHE scheme implementing even just \emph{classical reversible} unitaries must be highly inefficient.  It is important to note that this data localization technique bounds the efficiency of perfect IT-secure QHE for \emph{any} set of unitary permissible functions.

In spite of this limitation, we have mentioned several QHE schemes \cite{Tan:2014,Ouyang:2015,Lai:2017} that implement large sets of permissible functions with strong, but imperfect, information-theoretic security guarantees.  As the information localization argument of \cite{Yu:2014} relies integrally on perfect information-theoretic security, several authors \cite{Yu:2014,Ouyang:2015,Aaronson:2017} have asked whether an $\epsilon$-relaxation of the IT-security guarantee may allow for much larger sets of permissible functions. Theorem \ref{earlier_main} resolves this question in the negative.  

\begin{Theorem} \label{earlier_main}
Suppose a QFHE scheme satisfies, for all ciphertexts $\rho,\rho'$, $$\|\rho - \rho'\|_{\text{Tr}} \leq \epsilon$$ for some $\epsilon < 1$.  Then the communication complexity of the scheme must be exponential in $n$.
\end{Theorem}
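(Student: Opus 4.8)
The plan is to reduce an $\epsilon$-IT-secure QFHE scheme to a single-server quantum private information retrieval (PIR) protocol and then lower bound its communication via Nayak's bound, in the spirit of \cite{Baumeler:2013} but made robust to imperfect security. Set $N = 2^n$ and identify a database $a \in \{0,1\}^N$ with the truth table of a Boolean function $f_a$ on $n$ bits. A retriever who wants the bit $a_x$ runs \textbf{QHE.Enc} on the plaintext $\ket{x}$, keeping the key $\rho_k$ and sending the ciphertext $\rho_c^x$ to the server; the server, holding $a$, applies \textbf{QHE.Eval} for the (permissible, since the scheme is fully homomorphic) reversible map $x \mapsto (x, a_x)$ and returns the evaluated ciphertext, which decrypts under $\rho_k$ to $a_x$ by correctness. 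Query privacy is exactly the hypothesis $\|\rho_c^x - \rho_c^{x'}\|_{\text{Tr}} \le \epsilon$.

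First I would turn this PIR into a single quantum random access code (QRAC) encoding all $N$ bits of $a$. The obstruction is that the retriever's decoding uses the key register $K$, which is entangled with the query-dependent ciphertext, so one cannot naively excise the query $x$. I would remove it as follows: fix a reference query $x_0$ with purification $\ket{\Psi_{x_0}}$ on $K \otimes C$. Since $\|\rho_c^x - \rho_c^{x_0}\|_{\text{Tr}} \le \epsilon$, Uhlmann's theorem supplies an isometry $W_x$ on the retained register with $\big\|\, \ket{\Psi_x} - (W_x \otimes I_C)\ket{\Psi_{x_0}} \,\big\| = O(\sqrt{\epsilon})$. Thus the server may process the \emph{fixed} ciphertext $\rho_c^{x_0}$, producing a single state $\Omega_a$ independent of $x$, while the retriever recovers $a_x$ by applying the query-dependent correction $W_x$ to $K$ and then decrypting. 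Because \textbf{QHE.Eval} and \textbf{QHE.Dec} are channels and hence trace-distance nonincreasing, while correctness is exact, each $a_x$ is read out from $\Omega_a$ with probability $p \ge 1 - O(\sqrt{\epsilon})$.

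Then I would invoke Nayak's bound: any encoding of $N$ bits into an $m$-qubit state from which any single bit is recoverable with probability $p$ obeys
\[
 m \;\ge\; \big(1 - H(p)\big)\,N, \qquad H(p) = -p\log_2 p - (1-p)\log_2(1-p).
\]
With $N = 2^n$ this is exponential in $n$ provided $p$ is bounded away from $1/2$. The main obstacle is precisely this robustness: one must show that for every $\epsilon < 1$ the readout probability stays above $1/2$ (so that $1 - H(p) > 0$), sharpening the crude $1 - O(\sqrt{\epsilon})$ estimate if necessary, e.g.\ by keeping the perturbation at the level of the trace-distance-$\epsilon$ ciphertext marginals rather than passing through Uhlmann. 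Finally, to charge this exponential size to communication rather than to a large local key, I would note that the decryption key $\rho_k$ need only IT-securely conceal the $n$-qubit plaintext and so is of size $\mathrm{poly}(n)$ by the quantum key-length bound \cite{lai2018generalized}; hence the $\Omega(2^n)$ lower bound on the QRAC codeword falls on the communicated evaluated ciphertext, recovering Theorem \ref{first} of \cite{Yu:2014} in the perfect case $\epsilon = 0$ and establishing the claim for all $\epsilon < 1$.
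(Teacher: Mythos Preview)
Your proposal follows the same route as the paper's proof (which proceeds via its more general Theorem~\ref{main}): fix a reference plaintext $x_0$, use a key-space unitary to align the purified encryption of $x_0$ with that of an arbitrary $x$, build a QRAC whose codewords are the evaluated ciphertexts for the \emph{fixed} query $x_0$, decode bit $x$ by applying the key-space correction followed by \textbf{QHE.Dec}, and then invoke Nayak's bound with $N=2^n$.

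Two points of difference are worth recording. First, you obtain readout success $1-O(\sqrt{\epsilon})$ via Uhlmann and rightly flag that this gives a nontrivial Nayak bound only for small $\epsilon$. The paper instead asserts that ``by the Schmidt decomposition'' one already has trace distance $\le\epsilon$ between the \emph{purifications} after the key-space unitary, so that the QRAC failure probability is at most $\epsilon$ and the final bound $2^n(1-H(\epsilon))$ is positive for every $\epsilon<1$. Your caution is well placed---Uhlmann plus Fuchs--van~de~Graaf yields only $\sqrt{2\epsilon-\epsilon^2}$ in general---so the paper's step deserves scrutiny; on the other hand your suggested workaround (``keep the perturbation at the ciphertext marginals'') is not obviously implementable, since the decoder must act jointly on the correlated key register. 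Second, for the key-size accounting the paper simply notes that the encryption output is pure, so the key register may without loss of generality be taken no larger (in qubits) than the ciphertext register; this is cleaner than your appeal to the key-length bound of~\cite{lai2018generalized}, which is a \emph{lower} bound on key length and does not by itself force the scheme's key register to be $\mathrm{poly}(n)$.
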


This theorem appeared concurrently in the appendix of \cite{Newman:2017}, where it was used to prove bounds on transversal gates for quantum codes, and in \cite{Lai:2017}, which used similar generalizations of single-server private information retrieval bounds to the quantum setting \cite{Baumeler:2013}.  At the heart of these arguments is an application of Nayak's bound \cite{Nayak:1999}, which places limitations on the compression of classical information into quantum information in the context of quantum random access codes.  We now extend Theorem~\ref{earlier_main} to more closely generalize Theorem~\ref{first} to the imperfect security setting by considering a straightforward modification of Nayak's original bound.

\begin{Definition} \label{QRAC}
An $(n,m,p)$\emph{-quantum random access code (QRAC)} is a mapping of $n$ classical bits into $m$ qubits, \emph{[$b \mapsto \rho_b$]}, along with a set of POVM's $\{M_i^0, M_i^1\}_{i=1}^n$ satisfying, for all $b\in \{0,1\}^n$ and $i\in[n]$, $$ Tr(M_i^{b_i}\rho_b) \geq p.$$
\end{Definition}

Informally, this is simply a compression of classical information into quantum information that allows for a local recovery with some probability of success.  Nayak's bound then places a fundamental limitation on the recoverability of this compression \cite{Nayak:1999}.  For Theorem \ref{main}, we will require a small generalization of the bound for QRACs encoding subsets of strings of length $n$.

\begin{Lemma} \label{main_lemma}
Let $F \subseteq \{0,1\}^n$.  For each $x \in F$, define a classical-to-quantum encoding $x \mapsto \rho_x$, where each $\rho_x$ is a state on $m$ qubits.  Suppose there exists a set of POVM's $\{M_i^0, M_i^1\}_{i=1}^n$ satisfying, for all $x \in F$ and $i\in[n]$, $$ Tr(M_i^{x_i}\rho_x) \geq p.$$
Let $H(\cdot)$ denote the binary entropy function.  Then, $$m \geq \log|F| - n\cdot H(p).$$
\end{Lemma}

\begin{proof}
The proof follows from the original proof of Nayak's bound, but with the uniform distribution on $F$ rather than $\{0,1\}^n$.  

For any $x \in \{0,1\}^*$, define $$F_x = \{y \in F| y \text{ is prefixed by } x\},$$ and let $\sigma_x = \frac{1}{|F_x|}\sum_{y \in F_x} \rho_y$.  For $0 \leq k < n$ and for any $x \in \{0,1\}^k$ such that $|F_x| > 0$, we then have $$\sigma_x = \frac{|F_{x0}|}{|F_x|}\sigma_{x0} + \frac{|F_{x1}|}{|F_{x}|} \sigma_{x1}.$$  Let $X^x$ be a random variable drawn from the uniform distribution on $F$ conditioned on the first $k$ bits being $x$, and let $Y^x$ be a binary random variable obtained from a measurement of $\sigma_x$.  Then by Holevo's theorem \cite{holevo1973bounds}, $$S(\rho_x) \geq \frac{|F_{x0}|}{|F_x|}S(\sigma_{x0}) + \frac{|F_{x1}|}{|F_{x}|} S(\sigma_{x1}) + I(X^x:Y^x).$$  If the probability of failure of correctly identifying a sample $\sigma_{x0}$ or $\sigma_{x1}$ conditioned on knowing the prefix $x$ is at most $1-p$, then we can apply Fano's inequality \cite{cover2012elements}, $$I(X^x:Y^x) \geq H\left(\frac{|F_{x0}|}{|F_{x}|}\right) - H(p).$$  We can then expand the bound for $S(\sigma)$ out into singletons $\rho_y$ so that, \begin{align*} S(\sigma) &\geq \frac{1}{|F|}\sum\limits_{y \in F}S(\rho_y) + \sum\limits_{k=0}^{n-1} \sum\limits_{x \in \{0,1\}^k} \frac{|F_x|}{|F|} \left(H\left(\frac{|F_{x0}|}{|F_x|}\right) - H(p) \right) \\ &\geq \sum\limits_{k=0}^{n-1} \sum\limits_{x \in \{0,1\}^k} \frac{|F_x|}{|F|}H\left(\frac{|F_{x0}|}{|F_x|}\right) - n\cdot H(p) \\ &\geq \sum\limits_{k=0}^{n-1} H(X_{k+1}|X_{1,\ldots,k}) - n\cdot H(p) \\ &\geq H(X) - n \cdot H(p) \end{align*} where the final line follows from the chain rule on conditional entropy.  Since $X$ is the uniform distribution on $|F|$ elements and $S(\sigma)$ is at most $m$, the result follows.

\end{proof}

With this lemma in mind, we can extend the definition of a QRAC to include encodings of subsets of strings, and refer to such encodings as $(F,n,m,p)$-QRACs, where $F$ is a subset of $\{0,1\}^n$.  Applying the lemma with $F = \{0,1\}^n$, we obtain Nayak's bound \cite{Nayak:1999}.

\begin{Theorem}[Nayak] \label{Nayak}
Any $(n,m,p)$-quantum random access code must satisfy $$m \geq n (1-H(p)).$$
\end{Theorem}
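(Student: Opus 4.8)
The plan is to recognize Nayak's bound as the special case of Lemma~\ref{main_lemma} obtained by taking the encoded subset to be \emph{all} of $\{0,1\}^n$. First I would observe that an $(n,m,p)$-QRAC in the sense of Definition~\ref{QRAC} is precisely an $(F,n,m,p)$-QRAC with $F = \{0,1\}^n$: the encoding $b \mapsto \rho_b$ is defined on every length-$n$ string, and the POVMs $\{M_i^0, M_i^1\}_{i=1}^n$ satisfy $\mathrm{Tr}(M_i^{b_i}\rho_b) \geq p$ for all $b$ and all $i$, which is exactly the hypothesis of the lemma restricted to this choice of $F$.

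Next I would simply invoke Lemma~\ref{main_lemma}, which yields
$$m \geq \log|F| - n\cdot H(p).$$
Substituting $|F| = |\{0,1\}^n| = 2^n$ and taking the logarithm to base $2$ gives $\log|F| = n$, whence
$$m \geq n - n\cdot H(p) = n\bigl(1 - H(p)\bigr),$$
which is the claimed bound.

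Because all the analytic work — the application of Holevo's theorem, Fano's inequality, and the chain rule for conditional entropy — has already been carried out in the proof of the more general lemma, there is no genuine obstacle here. The only point worth checking is the bookkeeping that the uniform distribution on $F = \{0,1\}^n$ coincides with the uniform distribution on all $n$-bit strings appearing in Nayak's original argument, so that $H(X) = n$ exactly. This is immediate, and the theorem follows as a direct corollary of the lemma.
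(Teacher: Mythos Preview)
Your proposal is correct and matches the paper's approach exactly: the paper obtains Nayak's bound simply by applying Lemma~\ref{main_lemma} with $F = \{0,1\}^n$, so that $\log|F| = n$ and the inequality becomes $m \geq n(1 - H(p))$.
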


We now have the tools required to prove the extended imperfect IT-secure QFHE no-go theorem.  The central idea is to extract a quantum random access code from an IT-secure QHE scheme, and then apply Lemma \ref{main_lemma} to lower bound the communication complexity.

\begin{Theorem} \label{main}
Suppose we have a QHE scheme whose permissible functions can be used to evaluate a set $F_n$ of Boolean functions on $n$ bits.  Suppose further that, for any two ciphertexts $\rho, \rho'$ encrypting messages of $n$ bits, we have the $\epsilon$-IT security guarantee $$\| \rho - \rho' \|_{\text{Tr}} \leq \epsilon(n).$$ Then, the communication complexity of the protocol is at least $\log|F_n| - 2^n\cdot H(\epsilon(n)).$
\end{Theorem}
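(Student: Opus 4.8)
The plan is to extract from the scheme an $(F, 2^n, m, 1-\epsilon(n))$-QRAC in the sense of Lemma~\ref{main_lemma} and then read off the bound. I take $F \subseteq \{0,1\}^{2^n}$ to be the set of truth tables of the Boolean functions in $F_n$, so that $|F| = |F_n|$ and the $2^n$ coordinates of each string are indexed by the possible inputs $x \in \{0,1\}^n$; the coordinate at position $x$ of the string associated to $f$ is precisely $f(x)$. The number of encoding qubits $m$ will be the communication complexity of the scheme. To encode a function $f \in F_n$, I fix once and for all a reference input $x_0$ together with its encryption $\text{QHE.Enc}(x_0)$, and set $\rho_f := \text{QHE.Eval}^{U_f}(\text{QHE.Enc}(x_0))$, the evaluated ciphertext obtained by homomorphically applying the permissible functions that compute $f$ to this single fixed ciphertext. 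By construction $\rho_f$ depends only on $f$ and lives on the communicated register of $m$ qubits.

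For the local recovery at position $x$, I use the decryption procedure instantiated for the input $x$ to define a two-outcome POVM $\{M_x^0, M_x^1\}$ on the code register that outputs the decrypted bit. On the genuine evaluated ciphertext $\text{QHE.Eval}^{U_f}(\text{QHE.Enc}(x))$, perfect correctness of the scheme guarantees that this measurement returns $f(x)$ with certainty. To transfer this to $\rho_f$, I invoke $\epsilon$-security, $\|\text{QHE.Enc}(x) - \text{QHE.Enc}(x_0)\|_{\text{Tr}} \leq \epsilon(n)$, and observe that since $\text{QHE.Eval}^{U_f}$ is a channel and trace distance is non-increasing under channels, $\|\rho_f - \text{QHE.Eval}^{U_f}(\text{QHE.Enc}(x))\|_{\text{Tr}} \leq \epsilon(n)$. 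Because the difference of any measurement's outcome probabilities on two states is bounded by their trace distance, $M_x$ outputs $f(x)$ on $\rho_f$ with probability at least $1 - \epsilon(n)$ for every $x$ and $f$. This is exactly an $(F, 2^n, m, 1-\epsilon(n))$-QRAC, so Lemma~\ref{main_lemma} applies; using $H(1-\epsilon(n)) = H(\epsilon(n))$ it yields $m \geq \log|F_n| - 2^n \cdot H(\epsilon(n))$, which is the claimed bound.

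The step requiring the most care---and the one I expect to be the main obstacle---is the construction of $M_x$, namely ensuring that ``decrypting as input $x$'' is a genuine fixed POVM acting on the code register $\rho_f$ alone, with the relevant trace distance being the one to which $\epsilon$-security actually applies. The honest decryption for input $x$ uses the key generated when $x$ is encrypted, which is correlated with $x$'s ciphertext rather than with the reference ciphertext present in $\rho_f$; naively supplying an independent key destroys this correlation and the recovery fails. The resolution is to exploit the symmetric-key structure, treating the key as classical side information that is fixed across all codewords and reconstructible from the reference ciphertext (fixed, and public within the reduction) together with the target input $x$---as is transparent for quantum-one-time-pad--based schemes, where the key appropriate to input $x$ is obtained from the reference key by a shift determined by $x$, so that the function-independence of decryption guaranteed by compactness makes $M_x$ a bona fide measurement on the communicated register. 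Once $M_x$ is seen to be a fixed measurement on that register, the only state change between the honest and reference executions lies in the ciphertext, to which $\epsilon$-security directly applies, and the data-processing estimate above goes through; everything else is the bookkeeping of matching parameters to Lemma~\ref{main_lemma}.
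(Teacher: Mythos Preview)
Your overall strategy---build an $(F,2^n,m,1-\epsilon(n))$-QRAC from the scheme and invoke Lemma~\ref{main_lemma}---is exactly the paper's, and you have correctly located the only real difficulty: defining a fixed measurement $M_x$ on the code register that plays the role of ``decrypt as if the input had been $x$.'' But your proposed resolution is the gap. You appeal to ``symmetric-key structure'' and argue that the key appropriate to $x$ can be reconstructed from the reference key together with $x$, citing the one-time-pad as the paradigm. In the model at hand this is not justified: encryption is an \emph{isometry} into $\mathcal{H}_K\otimes\mathcal{H}_C$, so the joint key--ciphertext state is pure and the key register is generically entangled with the ciphertext (indeed, a classical randomly-generated key is absorbed into this picture by purification). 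There is no reason a deterministic ``shift determined by $x$'' should exist, and for a general scheme your $M_x$ is simply not well defined.

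The paper's fix is the missing idea you need. Keep the key register as part of the encoding: set $\rho_f=\text{QHE.Eval}^{U_f}(\rho^{x'}_{KM})$ on $\mathcal{H}_K\otimes\mathcal{H}_E$. Since $\rho^x_{KM}$ and $\rho^{x'}_{KM}$ are \emph{purifications} of the reduced ciphertexts $\rho^x_M,\rho^{x'}_M$, which are $\epsilon(n)$-close by security, Uhlmann's theorem (equivalently, comparing Schmidt decompositions) yields a unitary $U_K^x$ acting on the key register alone with $\|U_K^x\rho^{x'}_{KM}(U_K^x)^\dagger-\rho^x_{KM}\|_{\text{Tr}}\le\epsilon(n)$. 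Then $M_x(\cdot)=M_{Z_1}\circ\text{QHE.Dec}\circ U_K^x(\cdot)(U_K^x)^\dagger$ is a bona fide POVM on the code register, commutes past evaluation because $U_K^x$ acts only on $\mathcal{H}_K$, and recovers $f(x)$ with probability at least $1-\epsilon(n)$ by contractivity. The key register costs at most $\log\dim(\mathcal{H}_C)$ qubits (any purification of $\rho_M^{x'}$ needs no more), so $m\le\log\dim(\mathcal{H}_C)+\log\dim(\mathcal{H}_E)$, which is the communication complexity, and Lemma~\ref{main_lemma} finishes as you wrote.
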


\begin{proof}
Throughout, we will use subscripts to denote the subsystem(s) on which an operator acts.  Fix any $n \geq 1$.  

For any $x \in \{0,1\}^n$, define the state $\rho_{KM}^x = \text{QHE.Enc}_n(\ket{x})$.  This is pure as QHE.Enc is an isometry.  Furthermore, because $\rho_M^x$ purifies to a state of dimension at most $2\dim(\mathcal{H}_M)$, we may assume that $\dim(\mathcal{H}_K) \leq \dim(\mathcal{H}_M)$.

By our IT-security guarantee, we have that for all $x,x'$, $$\|\rho_M^x - \rho_M^{x'}\|_{\text{\it Tr}} \leq \epsilon(n).$$  By the Schmidt decomposition, there must exist a unitary $U_K$ so that $$\|U_K \rho_{KM}^x U_K^\dag- \rho_{KM}^{x'}\|_{\text{\it Tr}} \leq \epsilon(n).$$  Fix any base point $x'$ and define $U_K^x$ to be an operator that satisfies the above equation.

Let $f:\{0,1\}^n \rightarrow \{0,1\}$ be any Boolean function in $F_n$.  To be explicit, we may assume $f$ is a classical reversible permissible function restricted to the first output wire, without loss of generality.  Consider $f$ as a length $2^n$ bit string with bit $i = f(i_{\text{binary}})$.  Then we can define the classical-to-quantum encoding $f \mapsto \rho_f$ according to $$\rho_f = \text{QHE.Eval}_n^{U_f}(\rho^{x'}_{KM})$$ where $U_f$ is the unitary representing the full reversible circuit for $f$.

Furthermore, for any $x \in \{0,1\}^n$, we may define the measurement channel $$M_x(\cdot) = M_{Z_1} \circ \text{QHE.Dec}_n \circ U_K^x(\cdot)U_K^{x\dag}$$ where $M_{Z_1}$ is a $Z$-basis measurement on the first qubit.  

We claim that this forms a $(S,2^n,\dim(\mathcal{H}_C) + \dim(\mathcal{H}_E), 1 - \epsilon(n))$-QRAC. To see this, note that \begin{align*} M_x(\rho_f) &= M_{Z_1} \circ \text{QHE.Dec}_n \circ U_K^x(\text{QHE.Eval}_n^{U_f}(\rho^{x'}_{KM}))U_K^{x\dag} \\ &= M_{Z_1} \circ \text{QHE.Dec}_n \circ (\text{QHE.Eval}_n^{U_f}(U_K^x\rho^{x'}_{KM}U_K^{x\dag}) \\ &\approx_{\epsilon(n)} M_{Z_1} \circ \text{QHE.Dec}_n \circ (\text{QHE.Eval}_n^{U_f}(\rho^{x}_{KM})) \\ &\approx_{\epsilon(n)} f(x),\end{align*}  where $\approx_{\epsilon(n)}$ denotes $\epsilon(n)$ proximity in terms of trace distance.  

In the above, the second line follows from the fact that evaluation is performed on the ciphertext alone, and so commutes with the keyspace unitary.  The third line follows from the IT-security guarantee and contractivity of trace distance.  The final line is exactly the homomorphic property of the scheme for $f \in F_n$.  

Thus, the probability of failure of the resulting QRAC is at most $\epsilon(n)$.  Recalling the assumption $\dim(\mathcal{H}_K) \leq \dim(\mathcal{H}_C)$, this gives the parameters of the claimed QRAC.  Applying Lemma \ref{main_lemma}, we see that $\dim(\mathcal{H}_C) + \dim(\mathcal{H}_E) \geq \log|S| - 2^n\cdot H(\epsilon(n))$.  Since $\dim(\mathcal{H}_C) + \dim(\mathcal{H}_E)$ is precisely the communication complexity of the protocol as a whole, the Theorem follows.
\end{proof}

It is worth noting that in the theorem, we only require encryptions of classical bits to be secure, since we are only concerned with classical functions.  Even in the case of hiding a \emph{constant} fraction of information, setting $F_n$ to be the set of all Boolean functions on $n$ bits, we obtain Theorem \ref{earlier_main} as a corollary.

\begin{proof}[Proof of Theorem~\ref{earlier_main}]
Because a QFHE scheme can implement the full set of Boolean functions, Theorem \ref{main} gives us that the communication complexity is lower bounded by $2^n(1 - H(\epsilon))$, which is exponential in $n$ whenever $\epsilon < 1$.
\end{proof}

More generally, one might expect that for most information-theoretically secure schemes, realizing security on the order of $2^{-poly(n)}$ could be augmented to security on the order of $O(2^{-1.01n})$ up to a polynomial change in the function defining the security parameter. For such a scheme, $2^nH(\epsilon(n)) = o(1)$, giving us the following corollary.

\begin{Corollary}
Suppose we have a QHE scheme whose permissible functions can be used to evaluate a set $F_n$ of $n$-bit Boolean functions.  Further suppose that for any pair of $n$-qubit ciphertexts $\rho,\rho'$, it satisfies $$\|\rho - \rho'\|_{\text{Tr}} \leq \epsilon(n)$$ for some $\epsilon(n) = O(2^{-1.01n})$.  Then, its communication complexity is lower bounded by $\log|S| - o(1)$.
\end{Corollary}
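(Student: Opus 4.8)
The plan is to derive this as a direct corollary of Theorem~\ref{main} by controlling the asymptotic behavior of the penalty term $2^n \cdot H(\epsilon(n))$. Theorem~\ref{main} already tells us that the communication complexity is at least $\log|F_n| - 2^n \cdot H(\epsilon(n))$, so the only work remaining is to show that under the hypothesis $\epsilon(n) = O(2^{-1.01n})$, the error term $2^n \cdot H(\epsilon(n))$ vanishes, i.e. is $o(1)$. (Note the statement writes $\log|S|$ where $\log|F_n|$ is meant; I would simply invoke Theorem~\ref{main} with the set $F_n$.)

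First I would recall the standard asymptotic behavior of the binary entropy function near zero. For small arguments, $H(x) = -x\log_2 x - (1-x)\log_2(1-x)$, and the dominant contribution as $x \to 0$ is the $-x \log_2 x$ term; the second term is $O(x)$. Thus $H(x) = \Theta(x \log_2(1/x))$ as $x \to 0^+$. I would make this precise enough to plug in $\epsilon(n)$.

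Next I would substitute $\epsilon(n) = O(2^{-1.01n})$ into this estimate. Writing $\epsilon(n) \le c \cdot 2^{-1.01n}$ for some constant $c$ and all large $n$, and using monotonicity of $H$ near $0$, the penalty term is bounded by
\begin{align*}
2^n \cdot H(\epsilon(n)) &\le 2^n \cdot H\!\left(c\, 2^{-1.01n}\right) \\
&= 2^n \cdot \Theta\!\left(c\, 2^{-1.01n} \cdot \log_2\!\left(2^{1.01n}/c\right)\right) \\
&= \Theta\!\left(2^{-0.01n} \cdot n\right),
\end{align*}
which tends to $0$ as $n \to \infty$. Hence $2^n \cdot H(\epsilon(n)) = o(1)$, and the lower bound of Theorem~\ref{main} becomes $\log|F_n| - o(1)$, as claimed.

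The only genuine subtlety — and the single step I would be careful about — is ensuring the entropy estimate is valid in the regime where $\epsilon(n)$ is shrinking, and that the polynomial-in-$n$ factor arising from $\log_2(1/\epsilon(n))$ is genuinely overwhelmed by the exponential gain $2^{-0.01n}$ coming from the gap between the $2^n$ prefactor and the $2^{-1.01n}$ decay rate. Since any fixed exponential decay dominates any polynomial growth, the product $n \cdot 2^{-0.01n} \to 0$, and no difficulty arises; the $1.01$ exponent is exactly what guarantees a strictly positive decay rate after canceling the $2^n$ factor. This is entirely routine, which is why the result is stated as a corollary rather than a theorem.
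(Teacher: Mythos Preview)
Your proposal is correct and mirrors the paper's approach exactly: the paper simply observes in the text preceding the corollary that for $\epsilon(n)=O(2^{-1.01n})$ one has $2^n H(\epsilon(n))=o(1)$, and then invokes Theorem~\ref{main}. You have merely supplied the routine entropy asymptotics that the paper leaves implicit, and you correctly flagged the $\log|S|$ versus $\log|F_n|$ inconsistency.
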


Note that an identical argument will still give a $\theta(\log|F_n|)$ lower bound even if $\epsilon(n) = \theta(2^{-n})$, assuming $|F_n|$ grows super-exponentially.  However, we reiterate that we would expect such schemes to be equivalent up to polynomially related security parameters.

In comparing Theorem \ref{main} to Theorem \ref{first}, we see that Theorem \ref{main} gives a nice generalization that matches the dependency on the number of classical permissible functions.  Most importantly, by reducing to lower bounds in quantum random access codes, \emph{it is robust to imperfect IT-security}.  The mild tradeoff for this reduction is that the lower bound is restricted to Boolean functions, rather than general unitaries.

\section{Discussion}

One thing that is of vital importance to notice: \emph{neither of these bounds rule out the possibility of IT-secure \emph{leveled} QFHE}.  Because the evaluation parameter is chosen as a fixed polynomial of the input, leveled schemes only carry the promise of implementing circuits whose size is some polynomial in the input.  A coarse counting argument shows that the number of such circuits can grow at most exponentially in the input size.  These information-theoretic approaches are ill-equipped to give a super-polynomial lower bound, simply because the class of permissible functions is too small.

To this end, we might turn to complexity theory as a means of proving no-go results.  Aaronson et al. have given complexity-theoretic evidence that perfectly-secure QFHE is impossible if it takes the natural but restricted form of an offline one-round generalized quantum encryption scheme \cite{Aaronson:2017}.  Even so, we re-emphasize that this relaxation to imperfect security is non-trivial: for example \cite{Aaronson:2017} shows that the delegation of blind quantum computation by classical clients is highly unlikely, but there already exist such schemes that leak a portion of the data \cite{mantri2017flow}.

Although we have required perfect correctness throughout for simplicity, the proofs go through using generalizations of Nayak's bound in the QPIR setting that account for imperfect retrievals \cite{Baumeler:2013}.  Furthermore, it would be nice to generalize the lower bound to account for unitary functions to fully generalize the result of \cite{Yu:2014}.

As with any no-go result, the most interesting question is how one might plausibly circumvent it.  In this direction, there are at least two avenues to consider.  

The first is to lessen the stringency of the security even further.  We've seen that, when using a weaker security guarantee in terms of accessible information, we can implement a full \emph{continuum} of permissible functions \cite{Tan:2014}.  Could something similar be made universal, and if not, how strong can this computational model be?

The second avenue to consider is to limit the number of implementable functions.  One could certainly restrict oneself to exponentially large circuit classes, as several existing IT-secure QHE schemes do.  Perhaps more interestingly, could one construct an efficient leveled QHE scheme with meaningful IT-security guarantees, and if so, what would such a scheme look like?  In existing IT-secure QHE schemes, for a fixed key size, information about the underlying plaintext is leaked as you encrypt more bits, but not as you apply more gates.  It seems that a leveled scheme would need to apply gates in such a way that the gates themselves leak information about the underlying plaintext; otherwise, the scheme might as well be non-leveled.  This idea is not without precedence, since evaluation key gadgets that both carry information about the secret key and apply gates homomorphically are a cornerstone of computationally-secure QFHE \cite{Dulek:2016}.  

Still, we reiterate that there is evidence that this cannot be done in the perfect-security setting.  In the direction of proving impossibility, perhaps further complexity-theoretic bounds on the set of permissible functions can be applied when restricting to a single-round of interaction \cite{Aaronson:2017}.

\section{Acknowledgements}
The author would like to thank Florian Speelman and Eric Sabo for comments on an earlier draft, as well as Cupjin Huang for pointing out Lemma~\ref{main_lemma}. This research was supported in part by NSF grant 1717523. \\ {\it $^*$Email address:} michael.newman@duke.edu.
\bibliography{bibliography.bib}

\end{document}